\crefname{equation}{}{} 
\crefname{assumption}{Assumption}{}
\crefname{table}{Table}{} 
\crefname{figure}{Fig.}{}
\crefname{section}{Section}{}
\crefname{remark}{Remark}{}
\newlength\myindent
\def\trieq{\triangleq}
\newtheorem{theorem}{Theorem}
\newtheorem{lemma}{Lemma}
\theoremstyle{definition}  
\theoremstyle{definition} \newtheorem{assumption}{Assumption}
\theoremstyle{remark}  \newtheorem{remark}{Remark}
\title{\LARGE \bf
Coordinated Path Following of UAVs using Event-Triggered Communication over Networks with Digraph Topologies
}
\author{Hyungsoo Kang, Isaac Kaminer, Venanzio Cichella, and Naira Hovakimyan
\thanks{This work is supported by Air Force Office of Scientific Research (AFOSR) grant FA9550-21-1-0411, National Aeronautics and Space Administration (NASA) University Leadership Initiative (ULI) grant 80NSSC17M0051, and ONR Science of Autonomy Program under grant N0001424WX01651.}
\thanks{Hyungsoo Kang and Naira Hovakimyan are with the Department of Mechanical Science and Engineering, University of Illinois at Urbana-Champaign, 
        Urbana, IL 61801, USA.  
        {\tt\small \{hk15, nhovakim\} @illinois.edu}}
\thanks{Isaac Kaminer is with the Department of
Mechanical and Aerospace Engineering, Naval Postgraduate School, Monterey, CA 93943, USA.
        {\tt\small kaminer@nps.edu}}
\thanks{Venanzio Cichella is with the Department of Mechanical
Engineering, University of Iowa, 
        Iowa City, IA 52242, USA.
        {\tt\small venanzio-cichella@uiowa.edu}} 
}
\begin{document}

\maketitle
\thispagestyle{empty}
\pagestyle{empty}

\begin{abstract}
This article presents a novel time-coordination algorithm based on event-triggered communication to ensure multiple UAVs progress along their desired paths in coordination with one another. In the proposed algorithm, a UAV transmits its progression information to its neighbor UAVs only when a decentralized trigger condition is satisfied. Consequently, it significantly reduces the volume of inter-vehicle communications required to achieve the goal compared with the existing algorithms based on continuous communication. With such intermittent communications, it is shown that a decentralized coordination controller guarantees exponential convergence of the coordination error to a neighborhood of zero. Furthermore, a lower bound on the difference between two consecutive event-triggered times is provided showing that the Zeno behavior is excluded with the proposed algorithm. Lastly, simulation results validate the efficacy of the proposed algorithm.
\end{abstract}

\section{INTRODUCTION}
Recent remarkable progress in the theory and technologies of multi-agent systems has given rise to a widespread use of multi-UAV systems. Relevant applications include collaborative payload transportation \cite{Lee2013,Lee2017}, formation flying for space exploration \cite{Morgan20141725,Bandyopadhyay2015}, and cooperative SLAM \cite{Schmuck2017,Dubé2017} to name a few.

Among the diverse algorithms, coordinated path-following control has played a key role in solving challenging problems such as 1) search and rescue missions, where multiple UAVs progress in coordination covering a large area at a time; 2) sequential safe auto-landing, where multiple UAVs arrive at the glide slope safely separated by a predefined time interval; 3) atmospheric-science missions, where multiple UAVs fly cooperatively collecting data from a region of interest. 
The key requirement that the coordinated path following framework was developed to address was to guarantee a simultaneous arrival of each UAV at the end of its desired trajectory. The framework includes three steps: 1)  Generate a set of collision-free desired trajectories to be assigned to each UAV that minimize a given cost and satisfy the simultaneous arrival requirement, boundary conditions, UAV dynamic constraints, and guarantee obstacle avoidance \cite{Choe20161744}, \cite{Cichella2021}. 
2) Design  a  path-following control law \cite{CICHELLA201313} that steers a UAV along its desired trajectory.
3) Develop a decentralized time-coordination algorithm that enables each UAV to transmit progression information along its desired trajectory to its neighbors and also to adjust its progression speed based on the information provided by the neighbors. This step guarantees simultaneous time arrival by all the UAVs in the presence of disturbances.  

It has been shown in the early-stage research on coordinated path following that in Step 3 above the exponential stability of the time-coordination algorithms employed by all the UAVs can be reduced to a consensus problem. Moreover, in, for example, \cite{Kaminer2006}, \cite{Ghabcheloo2007133} it was assumed that the topology of the underlying communication network  is represented by a connected  bidirectional graph with a fixed topology. However, this is a strong assumption less likely to be satisfied by a typical communication network with
time-varying network topology. To address this issue,  researchers in \cite{Xargay2013499}, \cite{Cichella2015945} were able to guarantee convergence of the time-coordination algorithms for the case where the network topology is represented by a time-varying bidirectional graph that is connected in an integral sense, i.e. the integral of the graph from $t$ to $t+T$ is connected $\forall t\geq0$ with a constant $T>0$. This condition, connectedness in an integral sense, was used in \cite{Mehdi2017}, \cite{Tabasso2020436} to show that collision avoidance can be achieved as well as the time-coordination. In the work reported in \cite{Tabasso2022704}, the researchers applied the time-coordination algorithm \cite{Cichella2015945} to the problem of continuous monitoring of a path-constrained moving target. Our recent work \cite{Kang2024} showed convergence of time-coordination algorithms for a more general connectivity condition: the communication network is no longer required to be bidirectional. In fact, we have shown that the connectedness of the directed graph in an integral sense is sufficient to achieve  convergence. We note, however, that all the  algorithms discussed rely on continuous or piecewise continuous communications which may be undesirable or unavailable in real-world applications. Typical examples include underwater robotic missions where communication bandwidth is severely limited and military applications where stealth is of paramount importance. 

This issue can be resolved by resorting to event-triggered communication and control (ETC) algorithms. With ETC-based methods, the communication and control input updates occur only when a predefined condition is satisfied, thereby significantly reducing the inter-vehicle communications in the process of achieving consensus on variables of interest. The ETC-based coordination algorithms were pioneered by \cite{Dimarogonas2012,SEYBOTH2013245} with single-integrator dynamics. The algorithms have been improved to achieve consensus of the state variables of linear dynamics \cite{YANG2019129,Hu2018,Qian2019,Wu2018,Cheng2019} and nonlinear dynamics \cite{Nguyen2022,ZHANG2021,Wang2022}. In respect of the topology of the communication network, most of the early works assumed it to be a static bidirectional graph \cite{Dimarogonas2012,SEYBOTH2013245,Fan2015,YANG2019129} or digraph \cite{Hu2018},~\cite{Qian2019}. Recently, some algorithms have been proposed where the network topology is assumed to be a time-varying bidirectional graph \cite{Wu2018,Cheng2019,Hu2019} or digraph \cite{Jia2018,HAN2015196,Hao2023}. 
These existing event-triggered communications (ETC) based algorithms have been developed to attain the state consensus $\gamma_i(t)=\gamma_j(t), \ \forall i,j$. Importantly, these approaches cannot be applied to the UAV time coordination problem because it has more ambitious objectives: achieving the state consensus $\gamma_i(t)=\gamma_j(t), \ \forall i,j$ under the requirement that each $\dot{\gamma}_i(t)$ tracks a given reference signal $\dot{\gamma}_d(t)$. The additional requirement on $\dot{\gamma}_i(t)$ makes the consensus problem trickier. Thus, it is natural to ask whether we can design an ETC-based algorithm to achieve the ambitious time-coordination objectives. This paper affirmatively answers it. 

The contributions of this paper are summarized next. We propose a novel ETC-based consensus algorithm that attains the time-coordination objectives: the state consensus $\gamma_i(t)=\gamma_j(t), \ \forall i,j$ with each $\dot{\gamma}_i(t)$ tracking a given reference signal $\dot{\gamma}_d(t)$.
With the presented algorithm, a UAV transmits its progression information only when an event-triggering condition is satisfied, thereby significantly reducing the inter-vehicle communication requirement, particularly when compared with the existing time-coordination algorithms based on continuous communications, for example \cite{Kaminer2006,Ghabcheloo2007133,Xargay2013499,Cichella2015945}. We employ Lyapunov analysis to show that the proposed ETC based distributed control law achieves the coordination objectives with exponential stability. Also, it is proven that the time interval between two consecutive events is bounded from below. This implies that Zeno behavior is excluded with the proposed algorithm. Since the algorithm is designed without making assumptions on the vehicle dynamics, it is applicable to any vehicle endowed with a path-following controller. In this paper, we illustrate the use of the algorithm  on quadrotors, a very widely used UAV. 

The rest of this article is organized as follows. Section~\ref{prelim} provides a brief review of graph theory. Section~\ref{III} formulates the time-coordination problem and presents assumptions on the inter-vehicle information flow. In Section~\ref{IV}, the time-coordination control law based on event-triggered communication is described. The performance of the time-coordination algorithm is analyzed in the main theorem. In Section~\ref{V}, simulation results validate the efficacy of the proposed algorithm. Finally, Section~\ref{VI} presents some conclusions.

\section{PRELIMINARIES} \label{prelim}
\subsection{Graph Theory}

\noindent


A digraph of size $n$ is defined by $\mathcal{D}=(\mathcal{V},\mathcal{E},\mathcal{A})$, where $\mathcal{V}=\{1,\dots,n\}$ is the set of nodes, $\mathcal{E}$ is the set of edges, and $\mathcal{A}$ is the adjacency matrix. An edge is denoted by an ordered pair $(i,j)$, which means information can be transmitted from node $j$ to node $i$. The adjacency matrix $\mathcal{A}$ is constructed as follows: if $(i,j)\in \mathcal{E}$, one has $\mathcal{A}_{ij}=1$. Otherwise, $\mathcal{A}_{ij}=0$. The digraph $\mathcal{D}$ is represented by the Laplacian $L\trieq \Delta-\mathcal{A}$, where $\Delta$ is a diagonal matrix with $\Delta_{ii}\trieq\sum_{j=1,j\neq i}^{n}\mathcal{A}_{ij}$. The neighborhood of node $i$ is the set $\mathcal{N}_i\trieq\{j\in\mathcal{V}: (i,j)\in \mathcal{E}\}$. A directed path from node $i_s$ to node $i_0$ is a sequence of edges $(i_0,i_1)$, $(i_1,i_2)$, $\dots$, $(i_{s-1},i_s)$. The digraph $\mathcal{D}$ is said to be connected via a directed spanning tree if there exists at least one node from which every other node in the digraph can be reached via a directed path. In this case, the Laplacian $L$ has one eigenvalue at $0$ and all the rest are in the right half plane.


\section{TIME-COORDINATED PATH-FOLLOWING FRAMEWORK} \label{III}
\subsection{Path Following of a Single UAV}
A trajectory generation algorithm such as \cite{Choe20161744}, \cite{Cichella2021} produces a set of collision-free desired trajectories for $n$ UAVs
\begin{align} \label{trajectory}
    p_{d,i}(t_d): [0,t_f]\rightarrow \mathbb{R}^3, \ \ i\in\{1,\dots,n\},
\end{align}
where $t_f$ is the simultaneous time of arrival. Let us introduce a tunable variable $\gamma_i(t)$ called coordination state or virtual time
\begin{align*}
    \gamma_i(t): [0,\infty)\rightarrow[0,t_f], \ \ \ i\in\{1,\dots,n\}. 
\end{align*}
Notice that the range of $\gamma_i(t)$ is the same as the domain of the trajectory $p_{d,i}(\cdot)$. Consequently, it is possible to construct a composite function $p_{d,i}(\gamma_i(t))$, which can be at any point on the trajectory $p_{d,i}(\cdot)$ depending on the value of $\gamma_i(t)$. We use the expression $p_{d,i}(\gamma_i(t))$ to describe the desired position of the $i$th UAV, called the virtual target. 
The coordination state $\gamma_i(t)$ is an additional degree of freedom that allows one to adjust the progression of the UAV along its trajectory by controlling its value. This paper considers a case where the inter-vehicle coordination is destroyed by disturbances in the middle of the mission. We design a controller for $\gamma_i(t)$ such that it prevents UAVs from being put outside of the finite region of attraction of a path-following controller and recovers the inter-vehicle coordination automatically.

The path-following controller \cite{CICHELLA201313} gets the UAV to track $p_{d,i}(\gamma_i(t))$ by ensuring that the path following error 
\begin{align*}
    e_{PF,i}(t)\trieq p_i(t)-p_{d,i}(\gamma_i(t)), \ \ \ i\in\{1,\dots,n\}, 
\end{align*}
converges to zero. In the above expression, $p_i(t)$ represents the actual position of the UAV. The control law ensures exponential convergence to zero with the ideal performance of the inner-loop autopilot and to a neighborhood of zero with a non-ideal one. In other words, in the latter case, there exists $\rho>0$ such that 
\begin{align} \label{pf_error}
    \|e_{PF}(t)\|\leq\rho, \ \ \ \forall t\geq0,
\end{align}
where $e_{PF}(t)=[e_{PF,1}(t)^\top,\dots,e_{PF,n}(t)^\top]^\top$.

\subsection{Time Coordination of Multiple UAVs}
As the position of the virtual target is $p_{d,i}(\gamma_i(t))$ and the velocity of the virtual target is $\frac{dp_{d,i}(\gamma_i(t))}{dt}=\frac{dp_{d,i}(\gamma_i(t))}{d\gamma_i(t)}\dot{\gamma}_i(t)$, we can notice that $\gamma_i(t)$ and $\dot{\gamma}_i(t)$ characterize the progression of the UAV. Based on this, we formulate the following objectives. 

The UAVs are said to be synchronized at time $t$, if 
\begin{align} \label{obj1}
    \gamma_i(t)=\gamma_j(t), \ \ \ \forall i,j\in\{1,\dots,n\}.
\end{align}
Furthermore, for a desired mission progression pace $\dot{\gamma}_d(t)>0$, if 
\begin{align} \label{obj2}
    \dot{\gamma}_i(t)=\dot{\gamma}_d(t), \ \ \ \forall i\in\{1,\dots,n\},
\end{align}
the UAVs are considered to be progressing in accordance with the desired progression mission pace. 

To satisfy the above requirements, UAVs need to exchange their coordination states $\gamma_i(t)$ with their neighbors and tune the values.
This interaction among the UAVs can be modeled by using graph theory, see Section \ref{prelim}. 

The assumptions on the nature of the inter-vehicle communications are provided next.

\begin{assumption} \label{assum1}
The information flow between any two UAVs is directional without time delays.
\end{assumption}

\begin{assumption} \label{assum2}
The $i$th UAV can receive coordination information $\gamma_j(t)$ only 
from UAVs in its neighborhood set $\mathcal{N}_i$, where $j\in\mathcal{N}_i$.
\end{assumption}

The topology of the underlying communication network $\mathcal{D}$ 
satisfies the following assumption.

\begin{assumption} \label{assum3}

The communication network modeled by a digraph $\mathcal{D}$ contains a directed spanning tree. In other words, a root node in it can reach every other node via a directed path. 
\end{assumption}

\begin{remark}
    While Assumption~\ref{assum3} posits that the communication network contains a directed spanning tree—thereby suggesting the potential for continuous communication among nodes (or vehicles)—it is important to underscore that continuous communication is not required. Instead, the UAVs are designed to engage in communication only at specific, discrete events. Although the network is capable of facilitating uninterrupted communication, in practice, UAVs will only communicate when necessary, at predetermined events. Hence, the availability of communication is guaranteed at these events, ensuring that the system functions efficiently without the need for constant communication.
\end{remark}



\noindent \textit{Problem (Time-Coordinated Path-Following Problem):} Given a set of desired trajectories \eqref{trajectory} and a path-following control law that ensures \eqref{pf_error}, design a decentralized time-coordination control law such that the coordination states $\gamma_i(t)$  converge for all $i$ exponentially to a neighborhood of the equilibrium \eqref{obj1} and \eqref{obj2} under Assumptions~\ref{assum1}, \ref{assum2}, and \ref{assum3}.

\begin{remark}
    Designing a time-coordination control law for the coordination states $\gamma_i(t)$ that solves the Time-Coordinated Path-Following Problem guarantees the simultaneous arrival of all the UAVs at their respective destinations.
\end{remark}
\section{MAIN RESULT} \label{IV}
In this section, a decentralized time-coordination algorithm based on event-triggered communication (ETC) is presented. 

We propose the following decentralized control law under Assumptions \ref{assum1} and \ref{assum2}
\begin{align} \label{dyn1}
    \ddot{\gamma}_i(t)&=-b(\dot{\gamma}_i(t)-\dot{\gamma}_d(t)) \nonumber \\
    &\mathrel{\phantom{=}}-a\sum_{j\in\mathcal{N}_i}(\gamma_i(t)-\hat{\gamma}_j(t))+\bar{\alpha}_i(e_{PF,i}(t)), \\
    \gamma_i(0)&=\gamma_{i0}, \ \ \dot{\gamma}_i(0)=\dot{\gamma}_{i0}, \nonumber
\end{align}
where $a$ and $b$ are positive coordination control gains and $\bar{\alpha}_i(e_{PF,i}(t))$ is defined as 
\begin{align} \label{alpha}
    \bar{\alpha}_i(e_{PF,i}(t))=k_{PF}\frac{\dot{p}_{d,i}(\gamma_i(t))^\top e_{PF,i}(t)}{\|\dot{p}_{d,i}(\gamma_i(t))\|+\eta}
\end{align}
with $k_{PF}$ and $\eta$ being positive design parameters.

When the UAV is deviated from its desired position due to disturbances, the term $\bar{\alpha}_i(e_{PF,i}(t))$ adjusts $\gamma_i(t)$ in a way that the virtual target $p_{d,i}(\gamma_i(t))$ moves toward the UAV securing it inside the finite region of attraction of the path-following controller, which leads to discoordination of the fleet of UAVs. 
This issue is resolved by virtue of the second term in \eqref{dyn1}, which is geared towards recovering the inter-vehicle coordination.
Therefore, even if the UAVs are hit by disturbances and discoordinated in the middle of the mission, the controller \eqref{dyn1} helps to secure the UAVs and recover the coordination, leading to simultaneous arrival at the final destinations as planned in \eqref{trajectory}. The first term in \eqref{dyn1} is designed for $\dot{\gamma}_i(t)$ to track $\dot{\gamma}_d(t)$. Considering that the velocity of the virtual target is $\frac{dp_{d,i}(\gamma_i(t))}{dt}=\frac{dp_{d,i}(\gamma_i(t))}{d\gamma_i(t)}\dot{\gamma}_i(t)$, it is used for tuning the speed of the fleet, e.g. early arrival with $\dot{\gamma}_d(t)>1$ and late arrival with $\dot{\gamma}_d(t)<1$.    

Now, let us go into our ETC algorithm. Under the ETC framework, the $j$th UAV transmits to the $i$th UAV its coordination information $\gamma_j(t)$ only when its triggering event occurs. Since the $i$th UAV receives $\gamma_j(t)$ intermittently, it is reasonable for it to use an estimate $\hat{\gamma}_j(t)$ to adjust its own state $\gamma_i(t)$ as in \eqref{dyn1} between the event time instants of the $j$th UAV. The estimator of $\gamma_j(t)$ is given below: 
\begin{equation} \label{estimator}
\hat{\gamma}_j(t):\left\{ \begin{aligned} 
  &\ddot{\hat{\gamma}}_j(t)=-b(\dot{\hat{\gamma}}_j(t)-\dot{\gamma}_d(t)), \ \ t\in[t^j_{k}, \ t^j_{k+1}), \\
  &\dot{\hat{\gamma}}_j(t^j_{k})=\dot{\gamma}_j(t^j_{k}), \ \ \hat{\gamma}_j(t^j_{k})=\gamma_j(t^j_{k})
\end{aligned} \right.,
\end{equation}
where $t^j_{k}$ denotes the $k$th event time of the $j$th UAV at which it transmits its coordination state ${\gamma}_j(t^j_{k})$ and rate $\dot{\gamma}_j(t^j_{k})$.

Let the estimation error be $e_j(t)\trieq\hat{\gamma}_j(t)-\gamma_j(t)$. Then the controller \eqref{dyn1} can be rewritten as follows
\begin{align} \label{dyn1'}
    \ddot{\gamma}_i(t)&=-b(\dot{\gamma}_i(t)-\dot{\gamma}_d(t))-a\sum_{j\in\mathcal{N}_i}(\gamma_i(t)-\gamma_j(t)) \nonumber \\
    &\mathrel{\phantom{=}}+a\sum_{j\in\mathcal{N}_i}e_j(t)+\bar{\alpha}_i(e_{PF,i}(t)), \\
    \gamma_i(0)&=\gamma_{i0}, \ \ \dot{\gamma}_i(0)=\dot{\gamma}_{i0}. \nonumber
\end{align}
As it can be seen in \eqref{dyn1'}, the error $e_j(t)$ has an explicit impact on the time-coordination dynamics. The key point of the ETC algorithm is to ensure that $|e_j(t)|$ remains bounded. Therefore, when $|e_j(t)|$ reaches a given threshold, i.e., a transmission event is triggered, the $j$th UAV  transmits $\gamma_j(t)$ and $\dot{\gamma}_j(t)$. To be more specific, we define an event-triggering function $\delta_j(t)$ as
\begin{align}
    \delta_j(t)=|e_j(t)|-h(t),
\end{align}
where $h(t)=c_1+c_2e^{-c_3 t}$, $c_1>0$, $c_2,c_3\geq0$, is referred to as a threshold function, and $c_1\leq h(t)\leq c_1+c_2$. Whenever $\delta_j(t)> 0$, sampling and transmission take~place with $|e_j(t)|$ reset to $0$. With this ETC algorithm, boundedness of the error is guaranteed: $|e_j(t)|\leq h(t)$.
\begin{remark}
    The $i$th UAV uses the estimator \eqref{estimator} to propagate the time-coordination controller \eqref{dyn1}. On the other hand, the $j$th UAV executes the same code to check whether the event triggering condition $|e_j(t)|>h(t)$ is satisfied.
\end{remark}
\begin{remark}
    Comparing the estimator \eqref{estimator} with the controller \eqref{dyn1}, we can notice that the error $e_j(t)\trieq\hat{\gamma}_j(t)-\gamma_j(t)$ stems from $a\sum_{k\in\mathcal{N}_j}(\gamma_j(t)-\hat{\gamma}_k(t))+\bar{\alpha}_j(e_{PF,j}(t))$. When the $j$th UAV is pushed backward by headwinds, $\bar{\alpha}_j(e_{PF,j}(t))$ decreases fast from zero causing deceleration of $\gamma_j(t)$. It leads to a fast evolution of $a\sum_{k\in\mathcal{N}_j}(\gamma_j(t)-\hat{\gamma}_k(t))+\bar{\alpha}_j(e_{PF,j}(t))$. Consequently, the error $e_j(t)$ frequently reaches a threshold value triggering the event and transmission of the signal. In the other case when the $j$th UAV is in coordination with the other UAVs, $a\sum_{k\in\mathcal{N}_j}(\gamma_j(t)-\hat{\gamma}_k(t))+\bar{\alpha}_j(e_{PF,j}(t))$ is close to zero and the event is not triggered stopping transmitting the signal. To put it briefly, when the UAV falls behind the virtual target, it sends a message to other UAVs requesting a slowdown of their speeds.
\end{remark}
For the ease of stability analysis, we introduce the coordination error state $\xi_{TC}(t)=[\xi_1(t)^\top \ \xi_2(t)^\top]^\top$ with
\begin{equation}
\begin{aligned} \label{error}
    \xi_1(t)&=Q\gamma(t) \ \in \mathbb{R}^{n-1}, \\
    \xi_2(t)&=\dot{\gamma}(t)-\dot{\gamma}_d(t)1_n \ \in \mathbb{R}^{n},
\end{aligned}
\end{equation}
where $\gamma(t)=[\gamma_1(t), \dots, \gamma_n(t)]^\top$, and $Q\in\mathbb{R}^{(n-1)\times n}$ is a matrix that satisfies $Q1_n=0_{n-1}$ and $QQ^\top=\mathbb{I}_{n-1}$. 
\begin{remark}
    A matrix $Q_k\in\mathbb{R}^{(k-1)\times k}$, $k\geq2$, satisfying $Q_k1_k=0_{k-1}$ and $Q_k\left(Q_k\right)^\top=\mathbb{I}_{k-1}$, can be constructed recursively:
    \begin{align*}
        Q_k=
        \begin{bmatrix}
        \sqrt{\frac{k-1}{k}} & -\frac{1}{\sqrt{k(k-1)}}1_{k-1}^\top \\
        0 & Q_{k-1} \\
        \end{bmatrix}
    \end{align*}
    with initial condition $Q_2=[1/\sqrt{2} \ -1/\sqrt{2}]$. For notational simplicity, we denote $Q_n$ by $Q$, where $n$ is the number of UAVs.
\end{remark}
It is shown in \cite[Lemma~7]{phdenric2013} that $Q^\top Q=\mathbb{I}_n-\frac{1_n1_n^\top}{n}$ and the nullspace of Q is spanned by $1_n$. Now suppose that $\xi_1(t)=Q\gamma(t)=0_{n-1}$. Then since the nullspace of Q is spanned by $1_n$, we obtain that $\gamma_i(t)=\gamma_j(t)$, $\forall i,j\in\{1,\dots,n\}$. Furthermore, $\xi_2(t)=0_n$ implies that 
$\dot{\gamma}_i(t)=\dot{\gamma}_d(t)$, $\forall i\in\{1,\dots,n\}$. Therefore, $\xi_{TC}(t)=0_{2n-1}$ is equivalent to \eqref{obj1} and \eqref{obj2}. 

\begin{lemma} \label{lem1}
    The spectrum of $\bar{L}\trieq QLQ^\top$ is the same as that of $L$ without the first eigenvalue $\lambda_1=0$.
\end{lemma}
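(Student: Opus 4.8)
The plan is to exhibit an explicit orthogonal change of basis that puts $L$ in block triangular form, with $\bar L$ appearing as one diagonal block and the scalar $0$ as the other. Concretely, I would form the square matrix
\[
T \trieq \begin{bmatrix} Q \\ \tfrac{1}{\sqrt n}\,1_n^\top \end{bmatrix} \in \mathbb{R}^{n\times n}
\]
and first check that $T$ is orthogonal. Using the stated properties $QQ^\top = \mathbb{I}_{n-1}$ and $Q1_n = 0_{n-1}$ together with $1_n^\top 1_n = n$, a direct block computation gives $TT^\top = \mathbb{I}_n$; hence $T$ is invertible with $T^{-1} = T^\top = [\,Q^\top \ \ \tfrac{1}{\sqrt n}1_n\,]$ (alternatively one may invoke the already-stated identity $Q^\top Q = \mathbb{I}_n - \tfrac{1}{n}1_n1_n^\top$ to verify $T^\top T = \mathbb{I}_n$).

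Next I would compute the similarity transform $T L T^{-1} = T L T^\top$ in $2\times2$ block form,
\[
T L T^\top = \begin{bmatrix} QLQ^\top & \tfrac{1}{\sqrt n}\,QL1_n \\[2pt] \tfrac{1}{\sqrt n}\,1_n^\top L Q^\top & \tfrac{1}{n}\,1_n^\top L 1_n \end{bmatrix},
\]
and exploit the fact that a graph Laplacian has zero row sums: since $L = \Delta - \mathcal{A}$ with $\Delta_{ii} = \sum_{j\ne i}\mathcal{A}_{ij}$, we have $L1_n = 0_n$. Substituting this annihilates the $(1,2)$ and $(2,2)$ blocks, leaving
\[
T L T^\top = \begin{bmatrix} \bar L & 0_{n-1} \\ \star & 0 \end{bmatrix},
\]
a block lower-triangular matrix whose diagonal blocks are $\bar L = QLQ^\top$ and the scalar $0$. (For a digraph $1_n^\top L$ need not vanish, so the lower-left block $\star$ is generally nonzero, but this does not affect the argument.)

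Finally, I would read off the spectrum. Because $T$ is invertible, $TLT^\top$ and $L$ are similar and share the same characteristic polynomial, while block triangularity gives $\chi_L(s) = s\,\chi_{\bar L}(s)$, i.e.\ $\chi_{\bar L}(s) = \chi_L(s)/s$. Since the digraph $\mathcal{D}$ contains a directed spanning tree (Assumption~\ref{assum3}), $\lambda_1 = 0$ is a simple eigenvalue of $L$ and all remaining eigenvalues lie in the open right half plane; dividing out the factor $s$ therefore removes exactly $\lambda_1 = 0$ and leaves $\bar L$ with spectrum $\{\lambda_2,\dots,\lambda_n\}$, which is the claim. I do not expect a real obstacle here, as the argument is elementary linear algebra; the one point needing care is to use the row-sum identity $L1_n = 0$ (valid for digraphs) rather than the column-sum identity $1_n^\top L = 0$ (which fails for digraphs), so that the correct off-diagonal block is eliminated and the triangular structure is obtained.
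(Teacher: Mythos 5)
Your proof is correct, but it takes a genuinely different route from the paper's. The paper argues directly on eigenpairs: using $Q^\top Q=\mathbb{I}_n-\tfrac{1}{n}1_n1_n^\top$ and $L1_n=0_n$ it notes $LQ^\top Q=L$, so $Lx=\lambda x$ premultiplied by $Q$ yields $\bar L(Qx)=\lambda(Qx)$; each eigenvector $x$ of $L$ thus maps to an eigenvector $Qx$ of $\bar L$ except when $x\in\mathrm{span}(1_n)$, which kills exactly the eigenvalue $\lambda_1=0$. You instead complete $Q$ to an orthogonal matrix $T=[\,Q^\top\ \ \tfrac{1}{\sqrt n}1_n\,]^\top$ and exhibit $TLT^\top$ as block lower-triangular with diagonal blocks $\bar L$ and $0$, so that $\chi_L(s)=s\,\chi_{\bar L}(s)$. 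Both arguments hinge on the same key fact $L1_n=0_n$ (and you are right to flag that $1_n^\top L\neq 0$ for a digraph, which is why the lower-left block survives but is harmless). What your version buys is completeness: the characteristic-polynomial factorization accounts for algebraic multiplicities and establishes the spectral identity in both directions, whereas the paper's eigenvector-mapping argument, read literally, only shows that each non-$1_n$ eigenpair of $L$ produces an eigenpair of $\bar L$ and is silent on multiplicities and on the reverse inclusion. One small remark: your appeal to Assumption~\ref{assum3} at the end is not needed for the lemma itself --- $\chi_{\bar L}(s)=\chi_L(s)/s$ already removes one copy of $0$ unconditionally; the assumption only matters later, to conclude that $\bar L$ is Hurwitz-negated (no residual zero eigenvalue).
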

\begin{proof}
    From $LQ^\top Q=L\left(\mathbb{I}_n-\frac{1_n1_n^\top}{n}\right)=L-0=L$, $Lx=\lambda x$ is rewritten as $LQ^\top Q x=\lambda x$.~Premultiplying by $Q$ gives $QLQ^\top Q x=\lambda Qx$, which is equivalent to $\bar{L}(Qx)=\lambda(Qx)$. Additionally, since $1_n$ is in the nullspace of $Q$, the eigenvalue $\lambda_1=0$ of $L$ with its corresponding $1_n$ cannot be included in the spectrum of $\bar{L}$.
\end{proof}

Using the definitions of $\gamma(t)$, the Laplacian $L$ and the~adjacency matrix $\mathcal{A}$, the expression \eqref{dyn1'} can be rewritten in a more compact form
\begin{align} 
    \ddot{\gamma}(t)&=-b\xi_2(t)-aL\gamma(t)+a\mathcal{A}e(t)+\bar{\alpha}(e_{PF}(t)), \nonumber\\
    \gamma(0)&=\gamma_0, \ \ \dot{\gamma}(0)=\dot{\gamma}_0, \nonumber
\end{align} 
where $\bar{\alpha}(e_{PF}$ $(t))=[\bar{\alpha}_1(e_{PF,1}(t)),\dots,\bar{\alpha}_n(e_{PF,n}(t))]^\top$ and $e(t)=[e_1(t),\dots,e_n(t)]^\top$.

The main results of the paper are presented in the following theorem.
\begin{theorem} \label{thm}
Consider a set of desired trajectories \eqref{trajectory} and a path-following controller that ensures \eqref{pf_error}. Let the evolution of $\gamma_i(t)$ be governed by \eqref{dyn1} over the network $\mathcal{D}$ satisfying Assumption~\ref{assum3}. Then there exist time coordination control gains $a$, $b$, and $\eta$ such that
\begin{equation}
\begin{aligned}
    \|\xi_{TC}(t)\|&\leq\kappa_1\|\xi_{TC}(0)\|e^{-\lambda_{TC}t} \\&\mathrel{\phantom{=}}+\kappa_2\sup_{t\geq0}\left(an\sqrt{n}h(t)+k_{PF}\|e_{PF}(t)\|+|\ddot{\gamma}_d(t)|\right), \nonumber
\end{aligned}
\end{equation}
where $\lambda_{TC}$, $\kappa_1$, and $\kappa_2$ are presented in the proof. 

Moreover, the event-triggered time interval $t^i_{k+1}-t^i_k$ is  bounded from below. 
\end{theorem}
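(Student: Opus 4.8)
The plan is to handle the two assertions in turn, deriving from the first a uniform bound on $\xi_{TC}$ that will then power the inter-event estimate in the second.

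\emph{Closed-loop error dynamics and disturbance bound.} Differentiating \eqref{error} and using $Q1_n=0_{n-1}$, the identity $L=LQ^\top Q$ established in the proof of Lemma~\ref{lem1}, and $\bar L=QLQ^\top$, the compact form of \eqref{dyn1'} becomes a linear system $\dot\xi_{TC}=A\,\xi_{TC}+d(t)$ with $A=\left[\begin{smallmatrix}0&Q\\-aLQ^\top&-bI_n\end{smallmatrix}\right]$ and $d(t)=[\,0_{n-1}^\top\ \ (a\mathcal A e(t)+\bar\alpha(e_{PF}(t))-\ddot\gamma_d(t)1_n)^\top]^\top$. Componentwise estimates give $\|d(t)\|\le an\sqrt n\,h(t)+k_{PF}\|e_{PF}(t)\|+\sqrt n\,|\ddot\gamma_d(t)|$: by construction of the triggering rule $|e_j(t)|\le h(t)$ for all $t$, so every entry of $\mathcal A e(t)$ has modulus at most $(n-1)h(t)$; and $\|\dot p_{d,i}(\gamma_i)\|/(\|\dot p_{d,i}(\gamma_i)\|+\eta)\le 1$ in \eqref{alpha} gives $|\bar\alpha_i|\le k_{PF}\|e_{PF,i}\|$. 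All three terms are uniformly bounded, hence so is $\|d(t)\|$; note $\bar\alpha$ is a state-dependent but uniformly bounded perturbation, so no finite-escape issue arises.

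\emph{Hurwitz certification and ISS estimate.} A Schur-complement computation shows the characteristic polynomial of $A$ factors as $(s+b)\prod_{\mu\in\mathrm{spec}(\bar L)}(s^2+bs+a\mu)$. By Lemma~\ref{lem1} and Assumption~\ref{assum3}, every $\mu\in\mathrm{spec}(\bar L)$ lies in the open right half-plane; an elementary root-location argument then shows that for each such $\mu$ the quadratic $s^2+bs+a\mu$ is Hurwitz once $b$ exceeds a threshold $b^\star(a,\mu)$ (for large $b$ its roots lie near $-a\mu/b$ and $-b$). Hence, for any $a>0$, any $\eta>0$ (so that \eqref{alpha} is well defined), and $b>\max_\mu b^\star(a,\mu)$, the matrix $A$ is Hurwitz. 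Taking $P\succ 0$ with $A^\top P+PA=-\mathbb{I}_{2n-1}$, $V=\xi_{TC}^\top P\,\xi_{TC}$, bounding $\dot V\le -V/\lambda_{\max}(P)+c\,\sqrt V\,\|d(t)\|$, and applying the comparison lemma together with $\lambda_{\min}(P)\|\xi_{TC}\|^2\le V\le\lambda_{\max}(P)\|\xi_{TC}\|^2$ yields the stated estimate, with $\kappa_1=\sqrt{\lambda_{\max}(P)/\lambda_{\min}(P)}$, $\lambda_{TC}=1/(2\lambda_{\max}(P))$ and $\kappa_2$ a corresponding spectral ratio (the $\sqrt n$ factors absorbed into $\kappa_2$). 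I expect this step to be the crux: because $\mathcal D$ is directed, $\bar L$ is non-symmetric and possibly non-diagonalizable, so $b$ must be taken large enough to dominate the imaginary parts of $\mathrm{spec}(\bar L)$, and $P$ cannot simply be read off a diagonalization of $\bar L$ as in the bidirectional case.

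\emph{Positive lower bound on inter-event times.} Fix $i$ and an interval $[t^i_k,t^i_{k+1})$. Subtracting the estimator \eqref{estimator} from \eqref{dyn1} for $e_i=\hat\gamma_i-\gamma_i$, the reference terms $b\dot\gamma_d$ cancel, leaving the scalar ODE $\ddot e_i=-b\dot e_i+g_i(t)$ with $g_i(t)=a\sum_{k\in\mathcal N_i}(\gamma_i(t)-\hat\gamma_k(t))-\bar\alpha_i(e_{PF,i}(t))$, and, from the reset in \eqref{estimator}, $e_i(t^i_k)=0$ and $\dot e_i(t^i_k)=0$. Writing $\gamma_i-\hat\gamma_k=(Q^\top\xi_1)_i-(Q^\top\xi_1)_k-e_k$, using $\|Q^\top\xi_1\|=\|\xi_1\|\le\|\xi_{TC}\|$ with the uniform bound just obtained, $|e_k|\le c_1+c_2$, and $\|e_{PF}\|\le\rho$ from \eqref{pf_error}, one obtains a constant $G$ with $|g_i(t)|\le G$ for all $t$. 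Integrating twice from $t^i_k$ and using $1-e^{-x}\le x$ gives $|\dot e_i(t)|\le G(t-t^i_k)$ and $|e_i(t)|\le \tfrac12 G(t-t^i_k)^2$. Since the next event requires $|e_i(t)|>h(t)\ge c_1>0$, it cannot occur until $\tfrac12 G(t-t^i_k)^2\ge c_1$; therefore $t^i_{k+1}-t^i_k\ge\sqrt{2c_1/G}>0$, a uniform positive lower bound, so Zeno behavior is excluded.
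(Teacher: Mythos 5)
Your proof is correct, and it reaches both conclusions by routes genuinely different from the paper's. For the ISS estimate, you certify Hurwitz stability of the full $(2n-1)$-dimensional closed-loop matrix directly, via the factorization $(s+b)\prod_{\mu\in\mathrm{spec}(\bar L)}(s^2+bs+a\mu)$ (which I checked via the Schur complement and is right), and then invoke a single Lyapunov equation for that matrix. The paper instead changes variables to $\chi=b\xi_1+Q\xi_2$, builds a block-diagonal Lyapunov function from the Lyapunov equation for $\bar L$ alone, and enforces negativity of a $2\times2$ majorizing matrix by taking $b$ large with $a/b$ fixed; this yields more explicit constants ($\lambda_{TC}\le\frac{a}{b}\frac{\lambda_{\min}(\Xi)}{3\lambda_{\max}(\Psi)}$, so the convergence rate is tied to the gain ratio, cf.\ Remark~\ref{rem:conv_rate}), whereas your $P$ and hence $\kappa_1,\kappa_2,\lambda_{TC}$ are less transparent but the argument is shorter. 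Your bound $|\bar\alpha_i|\le k_{PF}\|e_{PF,i}\|$ via $\|\dot p_{d,i}\|/(\|\dot p_{d,i}\|+\eta)\le1$ is also cleaner than the paper's, which detours through $v_{\max}/(v_{\min}+\eta)$ and imposes $\eta>v_{\max}-v_{\min}$. For the inter-event bound, the paper treats the error dynamics as a first-order system $\dot\epsilon_i=A\epsilon_i+Bu_i$ and applies Gronwall--Bellman, obtaining $t^i_{k+1}-t^i_k\ge\frac{1}{\|A\|}\ln(1+c_1\|A\|/(\|B\|\bar u))$; you instead integrate the scalar ODE $\ddot e_i=-b\dot e_i+g_i$ explicitly from zero initial data, exploit that the damping only helps, and get the simpler (and for large $b$ sharper) bound $\sqrt{2c_1/G}$. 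Both halves of your argument share the same mild circularity as the paper's (the uniform bound on $\|\xi_{TC}\|$ feeds the constant $G$, while the triggering rule guarantees $|e_j|\le h$ used upstream), and your $\sqrt n$ discrepancy on the $|\ddot\gamma_d|$ term is harmlessly absorbed into $\kappa_2$, so neither is a gap.
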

\begin{proof}
Motivated by \cite{Cichella2015945}, we introduce a new state
\begin{align*}
    \chi(t)=b\xi_1(t)+Q\xi_2(t).
\end{align*}
Then the coordination error state $\xi_{TC}(t)=[\xi_1(t)^\top \ \xi_2(t)^\top]^\top$ can be redefined as $\bar{\xi}_{TC}(t)=[\chi(t)^\top \ \xi_2(t)^\top]^\top$ with dynamics
\begin{equation}
\begin{aligned} \label{dyn3}
    \dot{\chi}&=-\frac{a}{b}\bar{L}\chi+\frac{a}{b}QL\xi_2+aQ\mathcal{A}e+Q\bar{\alpha}(e_{PF}) \\
    \dot{\xi}_2&=-\frac{a}{b}LQ^\top\chi-\left(b\mathbb{I}_n-\frac{a}{b}L\right)\xi_2+a\mathcal{A}e +\bar{\alpha}(e_{PF})-\ddot{\gamma}_d1_n,
\end{aligned}
\end{equation}
where $\bar{L}\trieq QLQ^\top \in \mathbb{R}^{(n-1)\times(n-1)}$. 
\\
In order to construct a Lyapunov function candidate for~\eqref{dyn3}, we first show that $\dot{\phi}=-\bar{L}\phi$ is asymptotically stable. Under Assumption~\ref{assum3}, the algebraic multiplicity of the first eigenvalue $\lambda_1=0$ of $L$ is one and the remaining eigenvalues are in the open right half plane. Applying Lemma~\ref{lem1} to this fact implies that all the eigenvalues of $-\bar{L}$ are in the open left half plane; thus $\dot{\phi}=-\bar{L}\phi$ is asymptotically stable. According to the Lyapunov equation, for a given $\Xi=\Xi^\top>0$, there exists $\Psi=\Psi^\top>0$ such that
\begin{align} \label{Lyapunov_equ}
    -\bar{L}^\top\Psi-\Psi\bar{L}=-\Xi
\end{align}
Using $\Psi$, we define a Lyapunov function candidate for \eqref{dyn3} as follows:
\begin{align} \label{lya}
    V_{TC}=\chi^\top \Psi\chi+\frac{\beta}{2}\|\xi_2\|^2=\bar{\xi}^\top_{TC} W\bar{\xi}_{TC},
\end{align}
where $\beta>0$ and $W\trieq\begin{bmatrix}
\Psi & 0 \\ 
0 & \frac{\beta}{2}\mathbb{I}_n
\end{bmatrix}$. Notice that $V_{TC}$ satisfies $z^\top M_1z\leq V_{TC}\leq z^\top M_2z$, where $z\trieq[\|\chi\| \ \|\xi_2\|]^\top$ with 
\begin{align*} 
M_1\trieq\begin{bmatrix}
\lambda_{min}(\Psi) & 0 \\ 
0 & \beta/2
\end{bmatrix} \text{ and }
M_2\trieq\begin{bmatrix}
\lambda_{max}(\Psi) & 0 \\ 
0 & \beta/2
\end{bmatrix}.
\end{align*}
The time derivative of \eqref{lya} along the trajectory of \eqref{dyn3} satisfies
\begin{align*}
    \dot{V}_{TC}&=\chi^\top\left(-\frac{a}{b}\bar{L}^\top \Psi-\frac{a}{b}\Psi\bar{L}\right)\chi-\beta\xi^\top_2\left(b\mathbb{I}_n-\frac{a}{b}L\right)\xi_2 \\
    &\mathrel{\phantom{=}}+\chi^\top\left(2\frac{a}{b}\Psi QL-\beta\frac{a}{b}QL^\top\right)\xi_2 \\
    &\mathrel{\phantom{=}}+\left(2\chi^\top \Psi Q+\beta\xi^\top_2\right)\left(a\mathcal{A}e+\bar{\alpha}(e_{PF})\right)-\beta\xi^\top_2\ddot{\gamma}_d1_n,
\end{align*}
\noindent which leads to
\begin{align*}
    \dot{V}_{TC}\leq& -\frac{a}{b}\chi^\top\Xi\chi-\beta\left(b-\frac{a}{b}n\right)\|\xi_2\|^2 \\
    &+\left(2\frac{a}{b}n\|\Psi\|+\beta\frac{a}{b}n\right)\|\chi\|\|\xi_2\| \\
    &+\left(2\|\Psi\|\|\chi\|+\beta\|\xi_2\|\right)\left(an\|e\|+\|\bar{\alpha}(e_{PF})\|+|\ddot{\gamma}_d|\right),
\end{align*}
where we used \eqref{Lyapunov_equ}, $\|Q\|=1$, $\|L\|\leq n$, and $\|\mathcal{A}\|\leq n$. \\
\begin{adjustbox}{max width=\linewidth}
\parbox{\linewidth}{\begin{align*}
    \dot{V}_{TC}\leq& -\frac{a}{b}\lambda_{min}(\Xi)\|\chi\|^2-\beta\left(b-\frac{a}{b}n\right)\|\xi_2\|^2 \\
    &+\left(2\frac{a}{b}n\lambda_{max}(\Psi)+\beta\frac{a}{b}n\right)\|\chi\|\|\xi_2\| \\
    &+\left(2\lambda_{max}(\Psi)+\beta\right)\|\bar{\xi}_{TC}\|\left(an\|e\|+k_{PF}\frac{v_{max}}{v_{min}+\eta}\|e_{PF}\|+|\ddot{\gamma}_d|\right),
\end{align*}}
\end{adjustbox}
where $v_{max}=\max_{i}\{v_{i,max}\}$ and $v_{min}=\max_{i}\{v_{i,min}\}$ with $v_{i,max}$ and $v_{i,min}$ being the maximum and minimum achievable speed of the $i$th UAV.
Letting $\eta>v_{max}-v_{min}$, one obtains

\begin{adjustbox}{max width=\linewidth}
\parbox{\linewidth}{\begin{align*}
    \dot{V}_{TC}\leq &-z^\top Uz \\
    &+\left(2\lambda_{max}(\Psi)+\beta\right)\|\bar{\xi}_{TC}\|\left(an\|e\|+k_{PF}\|e_{PF}\|+|\ddot{\gamma}_d|\right),
\end{align*}}
\end{adjustbox}
where $z=[\|\chi\| \ \|\xi_2\|]^\top$ and \\
\begin{adjustbox}{max width=\linewidth}
\parbox{\linewidth}{\begin{align*}
U\trieq\begin{bmatrix}
\frac{a}{b}\lambda_{min}(\Xi) & -\frac{1}{2}\left(\frac{2a}{b}n\lambda_{max}(\Psi)+\beta\frac{a}{b}n\right) \\ 
-\frac{1}{2}\left(\frac{2a}{b}n\lambda_{max}(\Psi)+\beta\frac{a}{b}n\right) & \beta\left(b-\frac{a}{b}n\right)
\end{bmatrix}.
\end{align*}}
\end{adjustbox}
Next, we introduce
\begin{align} \label{convergence rate}
    \lambda_{TC}\leq\frac{a}{b}\frac{\lambda_{min}(\Xi)}{3\lambda_{max}(\Psi)}, 
\end{align}
where $\lambda_{TC}$ is the time-coordination error convergence rate. Consider
\\
\begin{adjustbox}{max width=\linewidth}
\parbox{\linewidth}{\begin{align} 
\label{UUU}
    &U-3\lambda_{TC}M_2 \nonumber \\
    &=
    \begin{bmatrix}
    \frac{a}{b}\lambda_{min}(\Xi)-3\lambda_{TC} \lambda_{max}(\Psi) & -\frac{1}{2}\left(\frac{2a}{b}n\lambda_{max}(\Psi)+\beta\frac{a}{b}n\right) \\ 
    -\frac{1}{2}\left(\frac{2a}{b}n\lambda_{max}(\Psi)+\beta\frac{a}{b}n\right) & \beta\left(b-\frac{a}{b}n-\frac{3}{2}\lambda_{TC}\right)
    \end{bmatrix}. 
\end{align}}
\end{adjustbox}
Note that for a fixed value of $\frac{a}{b}$, all the terms in \eqref{UUU} are fixed except for $\beta b$ in the $(2,2)$ element. Thus, choosing a sufficiently large $b$ with a fixed $\frac{a}{b}$ makes sure that \eqref{UUU} is positive semi-definite. 
Therefore, we obtain that $-z^\top Uz\leq-3\lambda_{TC}z^\top M_2z\leq-3\lambda_{TC}V_{TC}$, and thus the derivative of $V_{TC}$ is  bounded above by
\begin{adjustbox}{max width=\linewidth}
\parbox{\linewidth}{\begin{align*}
    \dot{V}_{TC}&\leq -3\lambda_{TC}V_{TC} \\
    &\mathrel{\phantom{\leq}}+\left(2\lambda_{max}(\Psi)+\beta\right)\|\bar{\xi}_{TC}\|\left(an\|e\|+k_{PF}\|e_{PF}\|+|\ddot{\gamma}_d|\right) \\
    &\leq-2\lambda_{TC}V_{TC}-\lambda_{TC}\min\{c_1,\beta/2\}\|\bar{\xi}_{TC}\|^2 \\
    &\mathrel{\phantom{\leq}}+\left(2\lambda_{max}(\Psi)+\beta\right)\|\bar{\xi}_{TC}\|\left(an\|e\|+k_{PF}\|e_{PF}\|+|\ddot{\gamma}_d|\right).
\end{align*}}
\end{adjustbox}
By applying Lemma $4.6$ in \cite{kha2002} and introducing the state transformation
$\bar{\xi}_{TC}=S\xi_{TC}\trieq
\begin{bmatrix}
b\mathbb{I}_{n-1} & Q \\ 
0 & \mathbb{I}_n
\end{bmatrix}\xi_{TC}$,~we~conclude that
\begin{equation} \label{iss}
\begin{aligned}
    \|\xi_{TC}(t)\|&\leq\kappa_1\|\xi_{TC}(0)\|e^{-\lambda_{TC}t} \\
    &\mathrel{\phantom{\leq}}+\kappa_2\sup_{t\geq0}\left(an\sqrt{n}h(t)+k_{PF}\|e_{PF}(t)\|+|\ddot{\gamma}_d(t)|\right), 
\end{aligned}
\end{equation}
where $\|e(t)\|\leq \sqrt{n}h(t)$ was used and
\begin{align}
    \kappa_1&\trieq\|S^{-1}\|\sqrt{\frac{\max\{c_2,\beta/2\}}{\min\{c1,\beta/2\}}}\|S\|, \label{kappa1} \\
    \kappa_2&\trieq\|S^{-1}\|\sqrt{\frac{\max\{c_2,\beta/2\}}{\min\{c_1,\beta/2\}}}\frac{\frac{k^2_\phi c_3}{\lambda}+\beta}{\lambda_{TC}\min\{c_1,\beta/2\}} \label{kappa2} 
\end{align}
with $c_1=\lambda_{min}(\Psi)$ and $c_2=\lambda_{max}(\Psi)$. \\
Lastly, we show that $t^i_{k+1}-t^i_k$ is  bounded below. From \eqref{estimator} and \eqref{dyn1'}, it follows that the estimation error dynamics can be written in the form of
\begin{equation}
\begin{aligned}
    \dot{\epsilon}_i(t)&=A\epsilon_i(t)+Bu_i(t), \ \ \ t\in[t^i_k,t^i_{k+1}), \label{dyn:error}\\
    \epsilon_i(t^i_k)&=[0 \ 0]^\top,
\end{aligned}
\end{equation}
where $\epsilon_i(t)=[e_i(t) \ \dot{e}_i(t)]^\top$, $A=\begin{bmatrix}
0 & 1 \\
0 & -b \\
\end{bmatrix}$, $B=\begin{bmatrix}
0 \\
1 \\
\end{bmatrix}$, and $u_i=a\sum_{j\in\mathcal{N}_i}(\gamma_i-\gamma_j)-a\sum_{j\in\mathcal{N}_i}e_j-\bar{\alpha}_i(e_{PF,i})$.

\noindent One can show that \\
\begin{adjustbox}{max width=\linewidth}
\parbox{\linewidth}{\begin{align*}
    |u_i|&\leq a\|L\gamma\|+a\|\mathcal{A}e\|+\|\bar{\alpha}(e_{PF})\| \\
    &\leq a\|L\|\|Q^\top\|\|Q\gamma\|+a\|\mathcal{A}\|\|e\|+k_{PF}\|e_{PF}\| \\
    &\leq an\left(\kappa_1\|\xi_{TC}(0)\|+\kappa_2\sup_{t\geq0}\left(an\sqrt{n}h(t)+k_{PF}\|e_{PF}\|+|\ddot{\gamma}_d|\right)\right) \\
    &\mathrel{\phantom{\leq}}+a\|\mathcal{A}\|\|e\|+k_{PF}\|e_{PF}\| \\
    &\leq an\kappa_1\|\xi_{TC}(0)\|+an\kappa_2\left(an\sqrt{n}\left(c_1+c_2\right)+k_{PF}\rho+\ddot{\gamma}_{d,max}\right) \\
    &\mathrel{\phantom{\leq}}+an\sqrt{n}(c_1+c_2)+k_{PF}\rho\trieq\bar{u}.
\end{align*}}
\end{adjustbox}
Therefore, we can show that the error $e_i(t)$ is bounded above using \eqref{dyn:error} as follows:
\begin{equation}
\begin{aligned}
    |e_i|\leq\|\epsilon_i\|&\leq\int^t_{t^i_k}(\|A\|\|\epsilon_i\|+\|B\||u_i|)d\tau \\
    &\leq\int^t_{t^i_k}\|A\|\|\epsilon_i\|d\tau+(t-t^i_k)\|B\|\bar{u}.
    \label{GB}
\end{aligned}
\end{equation}
Applying Gronwall-Bellman inequality (Lemma A.1 in \cite{kha2002}) to  inequality \eqref{GB} leads to \\
\begin{adjustbox}{max width=\linewidth}
\parbox{\linewidth}{\begin{align*}
    |e_i|&\leq\|\epsilon_i\|\leq(t-t^i_k)\|B\|\bar{u}+\|A\|\int^t_{t^i_k}(s-t^i_k)\|B\|\bar{u}e^{\|A\|(t-s)}ds \\
    &\leq\|B\|\bar{u}\left( e^{\|A\|(t-t^i_k)}-1\right)/\|A\|.
\end{align*}}
\end{adjustbox}
Considering that the transmission event is triggered when $|e_i(t)|>h(t)\geq c_1$, the inter-event time interval is bounded below:
\begin{align*}
    t^i_{k+1}-t^i_k\geq\frac{1}{\|A\|}\ln{(1+c_1\|A\|/(\|B\|\bar{u}))}>0.
\end{align*}
This completes the proof of Theorem~\ref{thm}.
\end{proof} 

\vspace{0.1em}

\begin{remark} \label{rem:conv_rate}
    Notice that the exponential convergence rate \eqref{convergence rate} depends on the ratio $\frac{a}{b}$ of the coordination control gains $a$ and $b$ in \eqref{dyn1}. A large value of $\frac{a}{b}$ leads to fast  inter-vehicle coordination.
\end{remark}

\section{SIMULATION RESULTS} \label{V}
This section presents simulation results of a coordinated path-following mission, which illustrate the efficacy of the proposed algorithm. The trajectory generation algorithm \cite{Cichella2021} designs a set of Bezier curves with the following specifications. The starting points are on $y=0\,m$. The trajectories simultaneously reach $y=150\,m$ exchanging $(x,z)$ coordinates. The inter-vehicle safety distance is $10\,m$. The mission duration is $t_f=21.10\,s$. The solid curves in Figure~\ref{fig:traj} depict the desired trajectories.
\begin{figure} [h!]
    \centering
    \includegraphics[width = 1.00\linewidth]{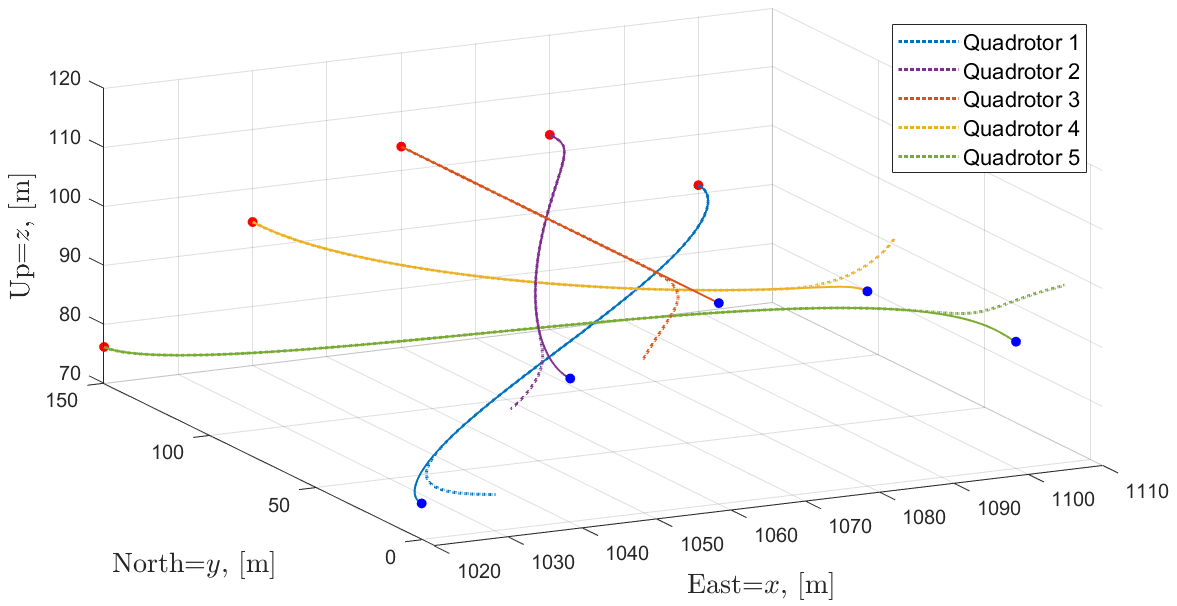}
    \hspace{0.4em}
    \caption{Time-coordinated path-following of five quadrotors. The starting points of the desired trajectories, blue dots, are on $y=0\,m$. The final points, red dots, are on $y=150\,m$.}
    \label{fig:traj}
\end{figure}

To emulate a situation where the quadrotors are hit and discoordinated by disturbances, the initial positions of the quadrotors are not the same as the initial points of the desired trajectories, that is, they have initial path-following errors. The path-following controller \cite{CICHELLA201313} steers each quadrotor to the desired trajectory. The dotted curves in Figure~\ref{fig:traj} are the paths traveled by the quadrotors.

Figure~\ref{fig:digraph} shows  the topology of the underlying communication network, which is connected satisfying Assumption~\ref{assum3}. The coordination control gains and the parameters in \eqref{alpha} are set to $a=3.75$, $b=4.82$, $k_{PF}=1.5$, and $\eta=12$. The initial conditions for the coordination states are $\gamma(0)=0_n$ and $\dot{\gamma}(0)=1_n$.
\begin{figure} [h!]
    \centering
    \includegraphics[width = 1.00\linewidth]{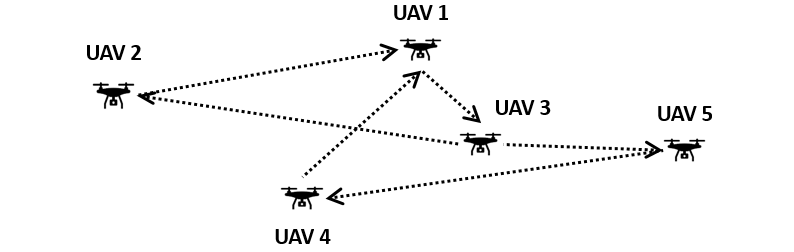}
    \caption{Topology of the underlying communication network.}
    \label{fig:digraph}
\end{figure}

Figures~\ref{fig:delta_gamma}-\ref{fig:event} illustrate how the coordination controller~\eqref{dyn1} works in combination with \eqref{estimator}. Initially, quadrotors $3$ and $5$ lie ahead of the plane $y=0\,m$, quadrotor $4$ lies behind it, and the remaining ones are on it. By the definition of~\eqref{alpha}, the term $\bar{\alpha}_i(e_{PF,i}(0))$ is positive for $i=3,5$; negative for $i=4$; and close to zero for the remaining $i$'s. As $\bar{\alpha}_i(e_{PF,i}(t))$ is added to the right hand side of \eqref{dyn1}, it makes sense that $\gamma_3(t)$, $\gamma_5(t)$ accelerate and $\gamma_4(t)$ decelerates for the first few seconds in Figure~\ref{fig:gamma_dot}. This evolution of $\gamma_i(t)$ helps each virtual target move towards the quadrotor and secure it inside the finite region of attraction of the path-following controller; however, in this process, the inter-vehicle coordination is destroyed. The recovery of it is guaranteed by the second term in \eqref{dyn1}. Figure~\ref{fig:delta_gamma} shows that the coordination is achieved in about $7\,s$. This coordination recovery time can be shortened by increasing the ratio $\frac{a}{b}$, as pointed out in Remark~\ref{rem:conv_rate}. Comparing \eqref{dyn1} with \eqref{estimator}, we can notice that the estimation error $e_i(t)=\hat{\gamma}_i(t)-\gamma_i(t)$ stems from $\bar{\alpha}_i(e_{PF,i}(t))$. With larger values of $\bar{\alpha}_i(e_{PF,i}(0))$ for $i=3,4,5$, the corresponding error $e_i(t)$, $i=3,4,5$ evolves fast and whenever it reaches the threshold function $h(t)$, sampling and transmission take place. In our simulation, we set $h(t)=0.03$. In Figure~\ref{fig:event}, it is confirmed that quadrotors $3$, $4$, and $5$ more frequently take a sample and transmit it right after the mission unfolds than the others do. The transmitted data is used by the estimator \eqref{estimator} to compute the estimate, which is then used in the second term of \eqref{dyn1} to achieve the inter-vehicle coordination. The fleet simultaneously arrives on $y=150\,m$ at $t=18.38\,s$, earlier than the original schedule $t_f=21.10\,s$. This is because we increased $\dot{\gamma}_d(t)$ from $1$ to $1.4$ in the middle of the mission. As the velocity of the virtual target is $\frac{dp_{d,i}(\gamma_i(t))}{d\gamma_i(t)}\dot{\gamma}_i(t)$, where $\dot{\gamma}_i(t)$ tracks $\dot{\gamma}_d(t)$, the increase of $\dot{\gamma}_d(t)$ led the fleet of quadrotors to increase the progression speed.

Notice that once the coordination errors become sufficiently small, the event is triggered less frequently, Figure~\ref{fig:event}. Therefore, we can say that the inter-vehicle communication occurs intelligently only when discoordination between the quadrotors increases. In summary, we confirmed that even though the fleet is hit by disturbances and blown away from the desired positions, the proposed ETC-based algorithm tries not to lose the UAVs by moving the virtual targets toward them and then recover the inter-vehicle coordination. All of these are done automatically without human intervention. Also, the communication takes place only when necessary, Figure~\ref{fig:event}. In other words, the proposed ETC-based algorithm achieves the coordination of UAVs with significantly reduced inter-vehicle communications compared to the authors' previous algorithms \cite{Kaminer2006,Ghabcheloo2007133,Xargay2013499,Cichella2015945}, \cite{Kang2024} based on continuous communications.

\begin{figure} [h!]
    \centering
    \includegraphics[width = 1.00\linewidth]{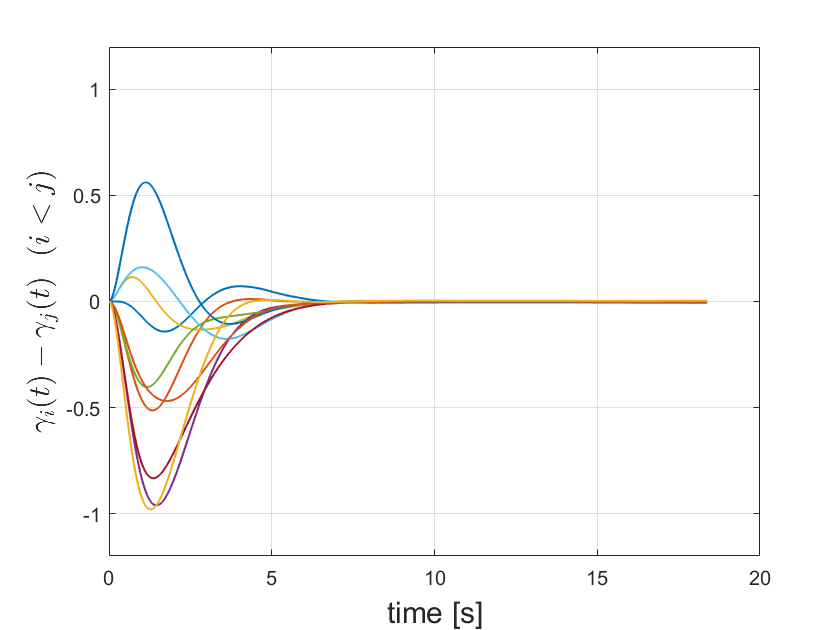}
    \caption{Convergence of the coordination error $\gamma_i(t)-\gamma_j(t)$ $(i<j)$ to a neighborhood of zero.}
    \label{fig:delta_gamma}
\end{figure}

\begin{figure} [h!]
    \centering
    \includegraphics[width = 1.00\linewidth]{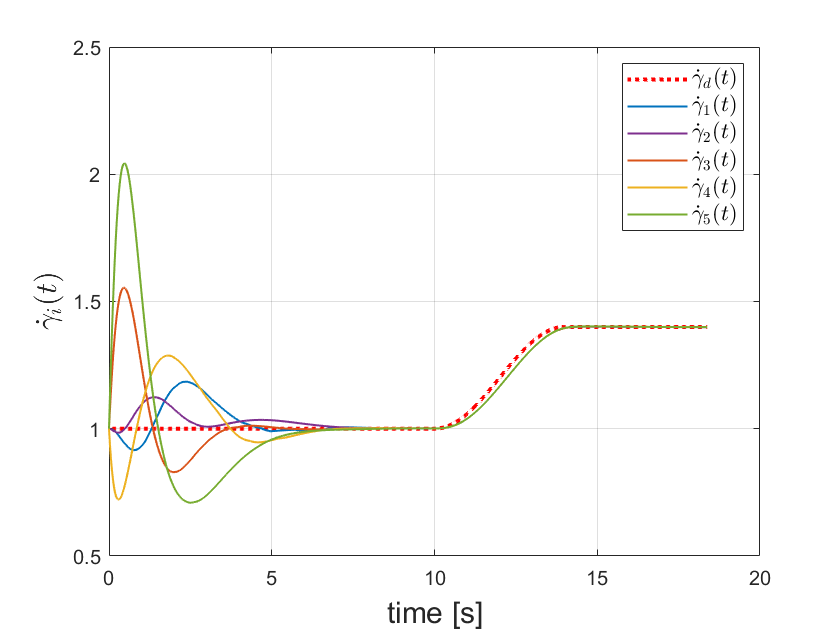}
    \caption{Convergence of $\dot{\gamma}_i(t)$ to a neighborhood of $\dot{\gamma}_d(t)$.}
    \label{fig:gamma_dot}
\end{figure}

\begin{figure} [h!]
    \centering
    \includegraphics[width = 1.00\linewidth]{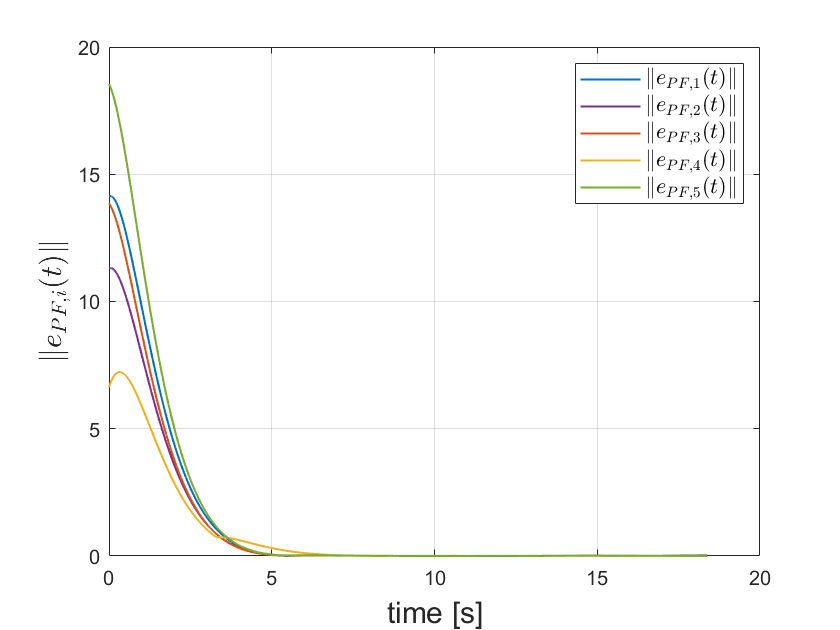}
    \caption{Path-following errors.}
    \label{fig:e_pf}
\end{figure}

\begin{figure} [h!]
    \centering
    \includegraphics[width = 1.00\linewidth]{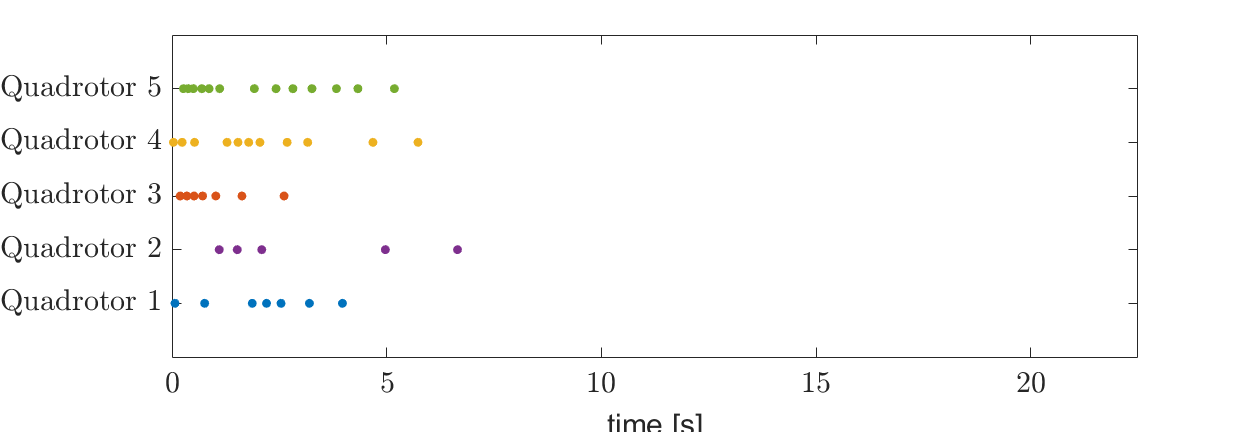}
    \caption{Event-triggered time instances.}
    \label{fig:event}
\end{figure}
\section{CONCLUSION} \label{VI}
This paper proposed a novel time-coordination algorithm using an event-triggered communication strategy. The exponential convergence of the coordination errors to a neighborhood of zero was proven using an ISS framework. Simulation results validated that the time coordinated path following of quadrotors can be achieved with the proposed algorithm. In future work, we will present an extension of the current ETC-based algorithm that works over a switching network. In a mission where a large number of UAVs is involved, the volume of required data transmission at a certain time might be above the capacity of the communication bandwidth. In order to circumvent this issue, it is more desirable to use a switching network instead of a static connected network. To be more specific, each UAV activates its transmission channels according to a predefined switching rule $\sigma(t)$ such that the network $\mathcal{D}_{\sigma(t)}$ is connected in an integral sense \cite{Kang2024}.



\bibliographystyle{ieeetr}
\bibliography{references}

\begin{thebibliography}{10}

\bibitem{Lee2013}
T.~Lee, K.~Sreenath, and V.~Kumar, ``Geometric control of cooperating multiple quadrotor uavs with a suspended payload,'' in {\em 52nd IEEE Conference on Decision and Control}, pp.~5510--5515, 2013.

\bibitem{Lee2017}
H.~Lee and H.~J. Kim, ``Constraint-based cooperative control of multiple aerial manipulators for handling an unknown payload,'' {\em IEEE Transactions on Industrial Informatics}, vol.~13, no.~6, pp.~2780--2790, 2017.

\bibitem{Morgan20141725}
D.~Morgan, S.-J. Chung, and F.~Y. Hadaegh, ``Model predictive control of swarms of spacecraft using sequential convex programming,'' {\em Journal of Guidance, Control, and Dynamics}, vol.~37, no.~6, p.~1725–1740, 2014.

\bibitem{Bandyopadhyay2015}
S.~Bandyopadhyay, G.~P. Subramanian, R.~Foust, D.~Morgan, S.-J. Chung, and F.~Y. Hadaegh, ``A review of impending small satellite formation flying missions,'' {\em 53rd AIAA Aerospace Sciences Meeting}, 2015.

\bibitem{Schmuck2017}
P.~Schmuck and M.~Chli, ``Multi-uav collaborative monocular slam,'' in {\em 2017 IEEE International Conference on Robotics and Automation (ICRA)}, pp.~3863--3870, 2017.

\bibitem{Dubé2017}
R.~Dubé, A.~Gawel, H.~Sommer, J.~Nieto, R.~Siegwart, and C.~Cadena, ``An online multi-robot slam system for 3d lidars,'' in {\em 2017 IEEE/RSJ International Conference on Intelligent Robots and Systems (IROS)}, pp.~1004--1011, 2017.

\bibitem{Choe20161744}
R.~Choe, J.~Puig-Navarro, V.~Cichella, E.~Xargay, and N.~Hovakimyan, ``Cooperative trajectory generation using pythagorean hodograph bézier curves,'' {\em Journal of Guidance, Control, and Dynamics}, vol.~39, no.~8, p.~1744 – 1763, 2016.

\bibitem{Cichella2021}
V.~Cichella, I.~Kaminer, C.~Walton, N.~Hovakimyan, and A.~M. Pascoal, ``Optimal multivehicle motion planning using bernstein approximants,'' {\em IEEE Transactions on Automatic Control}, vol.~66, no.~4, pp.~1453--1467, 2021.

\bibitem{CICHELLA201313}
V.~Cichella, R.~Choe, S.~B. Mehdi, E.~Xargay, N.~Hovakimyan, I.~Kaminer, and V.~Dobrokhodov, ``A 3d path-following approach for a multirotor uav on so(3),'' {\em IFAC Proceedings Volumes}, vol.~46, no.~30, pp.~13--18, 2013.
\newblock 2nd IFAC Workshop on Research, Education and Development of Unmanned Aerial Systems.

\bibitem{Kaminer2006}
I.~Kaminer, O.~Yakimenko, A.~Pascoal, and R.~Ghabcheloo, ``Path generation, path following and coordinated control for timecritical missions of multiple uavs,'' in {\em 2006 American Control Conference}, pp.~4906--4913, 2006.

\bibitem{Ghabcheloo2007133}
R.~Ghabcheloo, A.~Pascoal, C.~Silvestre, and I.~Kaminer, ``Non-linear co-ordinated path following control of multiple wheeled robots with bidirectional communication constraints,'' {\em International Journal of Adaptive Control and Signal Processing}, vol.~21, no.~2-3, p.~133–157, 2007.

\bibitem{Xargay2013499}
E.~Xargay, I.~Kaminer, A.~Pascoal, N.~Hovakimyan, V.~Dobrokhodov, V.~Cichella, A.~Aguiar, and R.~Ghabcheloo, ``Time-critical cooperative path following of multiple unmanned aerial vehicles over time-varying networks,'' {\em Journal of Guidance, Control, and Dynamics}, vol.~36, no.~2, p.~499–516, 2013.

\bibitem{Cichella2015945}
V.~Cichella, I.~Kaminer, V.~Dobrokhodov, E.~Xargay, R.~Choe, N.~Hovakimyan, A.~P. Aguiar, and A.~M. Pascoal, ``Cooperative path following of multiple multirotors over time-varying networks,'' {\em IEEE Transactions on Automation Science and Engineering}, vol.~12, no.~3, p.~945–957, 2015.

\bibitem{Mehdi2017}
S.~B. Mehdi, V.~Cichella, T.~Marinho, and N.~Hovakimyan, ``Collision avoidance in multi-vehicle cooperative missions using speed adjustment,'' in {\em 2017 IEEE 56th Annual Conference on Decision and Control (CDC)}, pp.~2152--2157, 2017.

\bibitem{Tabasso2020436}
C.~Tabasso, V.~Cichella, S.~Bilal~Mehdi, T.~Marinho, and N.~Hovakimyan, ``Guaranteed collision avoidance in multivehicle cooperative missions using speed adjustment,'' {\em Journal of Aerospace Information Systems}, vol.~17, no.~8, p.~436–453, 2020.

\bibitem{Tabasso2022704}
C.~Tabasso, C.~Kielas-Jensen, V.~Cichella, S.~Manyam, D.~W. Casbeer, and I.~Weintraub, ``Continuous monitoring of a path-constrained moving target by multiple unmanned aerial vehicles,'' {\em Journal of Guidance, Control, and Dynamics}, vol.~45, no.~4, p.~704–713, 2022.

\bibitem{Kang2024}
H.~Kang, I.~Kaminer, V.~Cichella, and N.~Hovakimyan, ``Coordinated path following of uavs over time-varying digraphs connected in an integral sense,'' in {\em 2024 American Control Conference (ACC)}, pp.~2586--2591, 2024.

\bibitem{Dimarogonas2012}
D.~V. Dimarogonas, E.~Frazzoli, and K.~H. Johansson, ``Distributed event-triggered control for multi-agent systems,'' {\em IEEE Transactions on Automatic Control}, vol.~57, no.~5, pp.~1291--1297, 2012.

\bibitem{SEYBOTH2013245}
G.~S. Seyboth, D.~V. Dimarogonas, and K.~H. Johansson, ``Event-based broadcasting for multi-agent average consensus,'' {\em Automatica}, vol.~49, no.~1, pp.~245--252, 2013.

\bibitem{YANG2019129}
R.~Yang, H.~Zhang, G.~Feng, H.~Yan, and Z.~Wang, ``Robust cooperative output regulation of multi-agent systems via adaptive event-triggered control,'' {\em Automatica}, vol.~102, pp.~129--136, 2019.

\bibitem{Hu2018}
W.~Hu, L.~Liu, and G.~Feng, ``Cooperative output regulation of linear multi-agent systems by intermittent communication: A unified framework of time- and event-triggering strategies,'' {\em IEEE Transactions on Automatic Control}, vol.~63, no.~2, pp.~548--555, 2018.

\bibitem{Qian2019}
Y.-Y. Qian, L.~Liu, and G.~Feng, ``Output consensus of heterogeneous linear multi-agent systems with adaptive event-triggered control,'' {\em IEEE Transactions on Automatic Control}, vol.~64, no.~6, pp.~2606--2613, 2019.

\bibitem{Wu2018}
Z.-G. Wu, Y.~Xu, R.~Lu, Y.~Wu, and T.~Huang, ``Event-triggered control for consensus of multiagent systems with fixed/switching topologies,'' {\em IEEE Transactions on Systems, Man, and Cybernetics: Systems}, vol.~48, no.~10, pp.~1736--1746, 2018.

\bibitem{Cheng2019}
B.~Cheng, X.~Wang, and Z.~Li, ``Event-triggered consensus of homogeneous and heterogeneous multiagent systems with jointly connected switching topologies,'' {\em IEEE Transactions on Cybernetics}, vol.~49, no.~12, pp.~4421--4430, 2019.

\bibitem{Nguyen2022}
N.~T. Hung and A.~M. Pascoal, ``Consensus/synchronisation of networked nonlinear multiple agent systems with event-triggered communications,'' {\em International Journal of Control}, vol.~95, no.~5, pp.~1305--1314, 2022.

\bibitem{ZHANG2021}
J.~Zhang, H.~Zhang, Y.~Cai, and W.~Wang, ``Consensus control for nonlinear multi-agent systems with event-triggered communications,'' {\em Applied Mathematics and Computation}, vol.~408, p.~126341, 2021.

\bibitem{Wang2022}
C.~Wang, C.~Wen, L.~Guo, and L.~Xing, ``Adaptive consensus control for nonlinear multiagent systems with unknown control directions using event-triggered communication,'' {\em IEEE Transactions on Cybernetics}, vol.~52, no.~5, pp.~3057--3068, 2022.

\bibitem{Fan2015}
Y.~Fan, L.~Liu, G.~Feng, and Y.~Wang, ``Self-triggered consensus for multi-agent systems with zeno-free triggers,'' {\em IEEE Transactions on Automatic Control}, vol.~60, no.~10, pp.~2779--2784, 2015.

\bibitem{Hu2019}
W.~Hu, L.~Liu, and G.~Feng, ``Event-triggered cooperative output regulation of linear multi-agent systems under jointly connected topologies,'' {\em IEEE Transactions on Automatic Control}, vol.~64, no.~3, pp.~1317--1322, 2019.

\bibitem{Jia2018}
Q.~Jia and W.~K.~S. Tang, ``Consensus of multi-agents with event-based nonlinear coupling over time-varying digraphs,'' {\em IEEE Transactions on Circuits and Systems II: Express Briefs}, vol.~65, no.~12, pp.~1969--1973, 2018.

\bibitem{HAN2015196}
Y.~Han, W.~Lu, and T.~Chen, ``Consensus analysis of networks with time-varying topology and event-triggered diffusions,'' {\em Neural Networks}, vol.~71, pp.~196--203, 2015.

\bibitem{Hao2023}
Y.~Hao, L.~Liu, and G.~Feng, ``Event-triggered cooperative output regulation of heterogeneous multiagent systems under switching directed topologies,'' {\em IEEE Transactions on Cybernetics}, vol.~53, no.~2, pp.~1026--1038, 2023.

\bibitem{phdenric2013}
E.~Xargay, {\em Time-Critical Cooperative Path-Following Control of Multiple Unmanned Aerial Vehicles}.
\newblock PhD thesis, University of Illinois at Urbana-Champaign, 2013.

\bibitem{kha2002}
H.~K. Khalil, {\em Nonlinear Systems}.
\newblock Prentice-Hall, Englewood Cliffs, NJ, 2002.

\end{thebibliography}

\end{document}